\documentclass[11pt]{article}
\usepackage{graphicx} % Required for inserting images
\usepackage{lineno}
\usepackage{enumerate}
\usepackage{enumitem}
\usepackage{algorithm}
\usepackage{appendix}
\usepackage[noend]{algpseudocode}
\usepackage{amsmath, amsfonts, amsthm}
\usepackage{todonotes}
\usepackage{hyperref}
\usepackage{lineno}
\usepackage{thm-restate}

\usepackage{xcolor}

\newtheorem{lemma}{Lemma}
\newtheorem{theorem}{Theorem}
\newtheorem{definition}{Definition}
\newtheorem{claim}{Claim}

%Focs call: 11pt, 1-inch margins all around, on letter-size paper
\usepackage[letterpaper,top=1in,bottom=1in,left=1in,right=1in]{geometry}

\title{Simpler Optimal Sorting from a Directed Acyclic Graph}

\date{}

\begin{document}

 \author{Ivor van der Hoog, Eva Rotenberg, Daniel Rutschmann.  \\ Technical University of Denmark, DTU }

\maketitle
\begin{abstract}
    Fredman proposed in 1976 the following algorithmic problem: 
    Given are a ground set $X$, some partial order $P$ over $X$, and some comparison oracle $O_L$ that specifies a linear order $L$ over $X$ that extends $P$. A \emph{query} to $O_L$ has as input distinct $x, x' \in X$ and outputs whether $x <_L x'$ or vice versa. 
    If we denote by $e(P)$ the number of linear orders that extend $P$, then it follows from basic information theory that $\log e(P)$ is a worst-case lower bound on the number of queries needed to output the sorted order of $X$. 

    Fredman did not specify in what form the partial order is given. 
    Haeupler, Hladík, Iacono, Rozhon, Tarjan, and T\v{e}tek ('24) propose to assume as input a directed acyclic graph, $G$, with $m$ edges and $n=|X|$ vertices. Denote by $P_G$ the partial order induced by $G$.
    Their algorithmic performance is measured in running time and the number of queries used, where they use $\Theta(m + n +  \log e(P_G))$ time and $\Theta(\log e(P_G))$ queries to output $X$ in its sorted order. 
    Their algorithm is worst-case optimal in terms of running time and queries, both. 
    Their algorithm combines topological sorting with heapsort.
    Their analysis relies upon sophisticated counting arguments using entropy, recursively defined sets defined over the run of their algorithm, and vertices in the graph that they identify as bottlenecks for sorting.  

    In this paper, we do away with sophistication. 
    We show that when the input is a directed acyclic graph then the problem admits a simple solution using $\Theta(m + n + \log e(P_G))$ time and $\Theta(\log e(P_G))$ queries. 
    Especially our proofs are much simpler as we avoid the usage of advanced charging arguments and data structures,
    and instead rely upon two brief observations. 
\end{abstract}

\paragraph{Funding.}
Ivor van der Hoog received funding from the European Union's Horizon 2020 research and innovation programme under the Marie Sk\l{}odowska-Curie grant agreement No 899987.
Daniel Rutschmann and Eva Rotenberg received funding from Carlsberg Foundation
Young Researcher Fellowship CF21-0302 “Graph Algorithms with Geometric Applications”.
Eva Rotenberg additionally receives funding from 
DFF Grant 2020-2023 (9131-00044B) “Dynamic Network Analysis”, and VILLUM Foundation
grant VIL37507 “Efficient Recomputations for Changeful Problems”.
%\paragraph{Funding.}
%Ivor van der Hoog has received funding from the European Union's Horizon 2020 research and innovation programme under the Marie Sk\l{}odowska-Curie grant agreement No 899987.

\section{Introduction}
Sorting is a fundamental problem in computer science.
%Numerous variantions of the classical sorting problem have been studied including sorting in place~\cite{munro1996fast, katajainen1999place}, sorting in an I/O efficient manner~\cite{} and dynamically maintaining a sorted order~\cite{}. 
In 1976, Fredman introduced a natural generalisation of sorting, namely sorting under partial information~\cite{fredman_how_1976}: 

Given a ground set $X$ of size $n$, and some partial order $P$ on $X$, and an oracle $O_L$ with access to a ground-truth linear order $L$ of $X$, the task is to minimise the number of oracle queries to recover the linear order of $X$. A query takes two input elements from $X$, and outputs their relation in the linear order. Algorithmic efficiency is measured in time, and the number of queries used. 
Let $e(P)$ denote the number of linear orders that extend the partial order $P$. Then, for any binary (e.g. yes/no) query, 
$\lceil\log_2(e(P))\rceil$ is a worst-case lower bound for the number of queries needed, simply because the sequence of answers should be able to lead to any possible extension of the order. 

\paragraph{Previous work.}
Fredman \cite[TCS'76]{fredman_how_1976} shows an exponential-time algorithm for sorting under partial information that uses $\log e(P) + O(n)$ queries.
This fails to match the $\Omega(\log e(P))$ lower bound
when $\log e(P)$ is sublinear. Kahn and Saks \cite[Order'84]{kahn_balancing_1984} prove
that there always exists a query which reduces the number of remaining linear extensions by a
constant fraction; showing that an $O(\log e(P))$-query algorithm exists. 
 Kahn and Kim \cite[STOC'92]{kahn_entropy_1992} are the first to also consider algorithmic running time.
 They note that Kahn and Saks \cite{kahn_balancing_1984} can preprocess a partial order using exponential time and space, so that given $O_L$ it can output the sorted order of $X$ in linear time plus $O(\log e(P))$ queries.
Kahn and Kim \cite[STOC'92]{kahn_entropy_1992} propose the first polynomial-time algorithm that performs $O(\log e(P))$ queries.
They do not separate preprocessing the partial order and oracle queries, and use an unspecified polynomial time using the ellipsoid method. 
Cardinal, Fiorini, Joret, Jungers and Munro~\cite[STOC'10]{cardinal_sorting_2013} (in their full version in Combinatorica) preprocess  $P$ in $O(n^{2.5})$ time, to output the sorted $X$ in $O(\log e(P) + n)$ time using $O(\log e(P))$ queries. 
Their runtime poses the question whether query-optimal (sub)quadratic-time algorithms are possible. 

For subquadratic algorithms, it becomes relevant how one obtains the partial order. 
Van der Hoog, Kostityna, L\"{o}ffler and Speckmann~\cite[SOCG'19]{van_der_hoog_preprocessing_2019} study the problem in a restricted setting where $P$ is induced by a set of intervals. Here, distinct $x_i, x_j \in X$ are incomparable whenever their intervals $([a_i, b_i], [a_j, b_j])$ intersect (otherwise, $x_i \prec_P x_j$ whenever $b_i < a_j$). 
They preprocess the intervals in $O(n \log n)$ time such that, given $O_L$, they can produce a pointer to a linked list storing $X$ in its sorted order using $O(\log e(P))$ queries and time. For their queries, they use \emph{finger search}~\cite{brodal2018finger} (i.e., \emph{exponential search} from a pointer). Finger search has as input a value $x_i$, a sorted list $\pi$, and a finger $p \in \pi$ with $p < x_i$. 
It finds the farthest $q$ along $\pi$ for which $q < x_i$ in $O(1+ \log d_i)$ time and comparisons. Here, $d_i$ denotes the length of the sublist from $p$ to $q$~\cite{brodal2018finger}.

Van der Hoog and Rutschmann~\cite[FOCS'24]{HoogRutschmann24} assume that $P$ is given as another oracle $O_P$ where queries receive distinct $x, x' \in X$ and output whether $x \prec_P x'$.
For fixed $c \geq 1$, they preprocess $O_P$ using $O(n^{1 + \frac{1}{c}})$ time and queries to $O_P$. Given $O_L$, they produce a pointer to a linked list storing $X$ in its sorted order using $O(c \cdot \log e(P))$ queries and time. 
Their query algorithm also makes use of {finger search}.
They show matching lower bounds in their setting. 

Haeupler, Hladík, Iacono, Rozhon, Tarjan, and T\v{e}tek~\cite['24]{haeupler2024fast} assume that the partial order $P_G$ is induced by a graph $G$ with $m$ edges. Their algorithm can be seen as a combination of topological sort and heapsort. 
Their algorithm first isolates a collection $B \subset X$ which they call \emph{bottlenecks}. 
They then iteratively consider all sources $S$ in $G - B$, remove the source $s \in S$ that is \emph{minimum} in the linear order $L$, and append $s$ to the output $\pi$.
Before appending $s$ to $\pi$, they find the maximum prefix $B_s = \{ b \in B \mid b <_L s \}$ using finger search where the finger is the head of $B$. They remove $B_s$ from $B$, append $B_s$ to $\pi$, and then append $s$. 
To obtain the minimum $s \in S$, they store all current sources of the graph in a \emph{heap} where comparisons in the heap use the linear oracle $O_L$.  
Their algorithm does not separate preprocessing the graph $G$, and oracle queries. It uses $\Theta(n + m + \log e(P_G))$ overall time and $
\Theta(\log e(P))$ linear oracle queries.
Since reading the input takes at least $\Omega(n + m)$ time, their overall algorithmic runtime is thereby worst-case optimal. 
The bulk of their analysis is dedicated to charging the algorithmic runtime and query time to the workings of their heap, and to handling the special bottleneck vertices.
We note that for any input $k$, their solution can also report the first $k$ elements of the sorted order in optimal time.

\paragraph{Contribution.}
We consider the setting from~\cite{haeupler2024fast}, where the input is a directed acyclic graph $G$ over $X$ with $m$ edges and where, unlike \cite{cardinal_sorting_2013, HoogRutschmann24, van_der_hoog_preprocessing_2019}, one does not separate the algorithmic performance between a pre-processing phase using $G$ and a phase that uses queries to $O_L$. 
We show that this problem formulation 
allows for a surprisingly simple solution:
$G$ denotes the input and $H_i$ the graph at iteration $i$. 
Remove a maximum-length directed path $\pi$ from $G$ to get $H_1$. 
Iteratively remove an \emph{arbitrary} source $x_i$ from $H_i$ to get $H_{i+1}$. Insert $x_i$ into $\pi$ using \emph{finger search} where the finger $p_i$ is the farthest in-neighbour of $x_i$ in $G$ along $\pi$ (we find $p_i$ by simply iterating over all in-neighbours of $x_i$).  
We use $\Theta(n + m + \log e(P_G))$ time and $\Theta(\log e(P_G))$ queries. We consider this algorithm to be simpler. 
Our proof of correctness is considerably simpler, as it relies upon only one counting argument and one geometric observation.

\section{Algorithm}

The input is a directed acyclic graph $G$ with vertex set $X = (x_1, \ldots, x_n)$. 
This graph induces a partial order $P_G$ where $x_i \prec x_j$ if and only if there exists a directed path from $x_i$ to $x_j$ in $G$. The input also contains an oracle $O_L$ that specifies a linear order $L$ over $X$. For any distinct $x_i, x_j$, an (oracle) query answers whether $x_i <_L x_j$. 
The goal is to output $X$ in its sorted order $L$.

We describe our algorithm.
% For completeness, we compare our algorithm to the one from~\cite{haeupler2024fast} in Appendix~\ref{app:competitors}. 
Our key observation is that the problem becomes significantly easier when we first extract a maximum-length directed path $\pi$ from $G$. 
We iteratively remove an \emph{arbitrary} source $x_i$ from $H = G - \pi$ and insert it into $\pi$. Once $H$ is empty, we return $\pi$.

In our analysis, our logarithms are base $2$ and $d^*(x_i)$ denotes the sum of the out-degree and the in-degree of a vertex $x_i$ in $G$. 

\paragraph{Data structures.}
We maintain a path $\pi$ where for all vertices $p, q \in \pi$: $p <_L q$ if and only if $p$ precedes $q$ in $\pi$. 
We store the path $\pi$ in a dynamic list order structure $T_\pi$ which is a data structure that supports the following operations:  

\textsc{FingerInsert}$(q_i, x_i)$. Given vertices $x_i \not \in \pi$ and $q_i \in \pi$, insert $x_i$ into $\pi$ succeeding  $q_i$.

\textsc{Search}$(x_i, p_i, O_L)$. Given a vertex $x_i \not \in \pi$ and a vertex $p_i \in \pi$ with $p_i <_L x_i$. Return the farthest vertex $q_i$ along $\pi$ where $q_i <_L x_i$. 
Let $d_i$ denote the number of vertices on the subpath from $p_i$ to $q_i$ along $\pi$. 
We want to use $O(1 + \log d_i)$ time and queries to $O_L$.

\textsc{Compare}$(p, q)$. Given $p, q \in \pi$, return $q$ if it succeeds $p$ in $\pi$ in $O(1)$ time (return $p$ otherwise).

\noindent
To achieve this, we store $\pi$ in leaf-linked a balanced finger search tree $T_\pi$. Many implementations exist~\cite{brodal2018finger}. We choose the level-linked (2-4)-tree by Huddleston and Mehlhorn~\cite{huddleston1982new}, which supports \textsc{FingerInsert} in amortised $O(1)$-time, and \textsc{Search} in $O(1 + \log d_i)$ time. 
For \textsc{Compare}, we store the linked leaves in the simple dynamic list ordering structure by Bender et al.~\cite{bender2002two} whose simplest implementation supports \textsc{Compare} in $O(1)$ time, and \textsc{FingerInsert} in amortised $O(1)$ time. 

\begin{algorithm}[h]
\caption{Sort(directed acyclic graph $G$ over a ground set $X$, Oracle $O_L$) \hfill time}\label{algo:sort}
\begin{algorithmic}[1]
\State $\pi \gets$ a longest directed path in $G$ \Comment{$O(n + m)$}
\State $T_\pi$ $\gets$ a level-linked (2-4)-tree over $\pi$~\cite{huddleston1982new} \Comment{$O(n)$} 
\State $H \gets G - \pi$ \Comment{$O(n + m)$}
\State Compute for each vertex in $H$ its in-degree in $H$ \Comment{$O(n + m)$}
\State $S \gets $ sources in $H$ \Comment{$O(n)$}
 \While{ $S \neq \emptyset$}
\State Remove an arbitrary vertex $x_i$ from $S$ \Comment{$O(1)$} 
\State $p_i \gets$ a dummy vertex, which is prepended before the head of $\pi$ \Comment{$O(1)$}
\ForAll{in-neighbors $u$ of $x_i$ in $G$} \Comment{$O( d^*(x_i))$} 
\State $p_i \gets $ \textsc{Compare}($p_i$, $u$) \Comment{$O(1)$}
\EndFor
\State Remove $x_i$ from $H$ and add any new sources in $H$ to $S$ \Comment{$O(d^*(x_i))$}
\State $q_i \gets $ \textsc{Search}($x_i$, $p_i$, $O_L$) \Comment{$O(1 + \log d_i)$}
\State \textsc{FingerInsert}($q_i$, $x_i$)\Comment{$O(1)$ amortised}
\EndWhile
\State \textbf{return} the leaves of $T_\pi$ in order \Comment{$O(n)$}
\end{algorithmic}
\end{algorithm}

\noindent
\paragraph{Algorithm and runtime analysis.} Our algorithm is straightforward. The pseudo code is given by Algorithm~\ref{algo:sort} where each algorithmic step also indicates the running time. We require linear space. 
We first extract a longest directed path $\pi$ from $G$, which we store in our two data structures. 
This takes $O(n + m)$ total time. 
If $\pi$ has $n - k$ vertices, this leaves us with a graph $H$ with $k$ vertices. 

We then iteratively remove an arbitrary source $x_i$ from $H$. 
Since $x_i$ is a source in $H$, all vertices $p \in X$ that have a directed path to $x_i$ in $G$ must be present in $\pi$. 
We iterate over all in-neighbours of $x_i$ in $G$, and use our \textsc{Compare} operation to find the in-neighbour $p_i$ that is furthest along $\pi$. 
If $x_i$ has no in-neighbours, $p_i$ is a dummy vertex that precedes the head of $\pi$ instead. 
Finally, we proceed in a way that is very similar to the algorithm for topological sorting by Knuth~\cite{knuth1997art}: We remove $x_i$ from $H$, iterate over all out-neighbours of $x_i$ in $H$ and decrement their in-degree, and update the linked list of sources to include the newly found ones. 
Since we inspect each in- and out-edge of $x_i$ only once, this takes $O(m)$ total time.

We then want to find the farthest $q_i$ along $\pi$ that precedes $x_i$ in 
$L$.  In the special case where $p_i$ is the dummy vertex preceding the head $h$ of $\pi$, we use a query to $O_L$ to check whether $x_i <_L h$. If so, we return $q_i = p_i$. Otherwise, we set $p_i = h$ and invoke \textsc{Search}. 
This returns $q_i$ in $O(1 + \log d_i)$ time and queries, where $d_i$ is the length of the subpath from $p_i$ to $q_i$ in $\pi$ ($d_i = 1$ if $p_i = q_i$). 

Finally, we insert $x_i$ into $T_\pi$  succeeding $q_i$ in constant time. Updating the dynamic list order data structure and rebalancing $T_\pi$ takes amortised $O(1)$ time. Thus, these operations take $O(n)$ total time over all insertions. 
\textnormal{Therefore, our algorithm uses } $O\Bigl(n + m + \sum\limits_{x_i \in H} (1 + \log d_i)\Bigr)$ \textnormal{ time  and } $O\Bigl(\sum\limits_{x_i \in H} (1 + \log d_i)\Bigr)$ \textnormal{ queries.}

\subsection{Proof of optimality}
Recall that $e(P_G)$ denotes the number of linear extensions of $P_G$ and let $H$ start out with $k$ vertices. 
For ease of analysis, we re-index the vertices $X$ so that the path $\pi$ are vertices $(x_{k+1}, \ldots, x_n)$, in order, and $x_i$ for $i \in [k]$ is the $i$'th vertex our algorithm inserts into $T_\pi$.
We prove that our algorithm is tight by showing that $\sum\limits_{i = 1}^k (1 + \log d_i)  = k + \sum\limits_{i = 1}^k \log d_i \in O(\log e(P_G))$.

\begin{lemma}
\label{lem:kbound}
Let $G$ be a directed acyclic graph, $P_G$ be its induced partial order and $\pi$ be a longest directed path in $G$. If $\pi$ has  $n - k$ vertices then $\log e(P_G) \geq k$.
\end{lemma}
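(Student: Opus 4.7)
The plan is to construct explicitly at least $2^k$ distinct linear extensions of $P_G$ by introducing a level function on the vertices and permuting freely within each level set.

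First I would define $\ell(v)$ to be the number of vertices on a longest directed path in $G$ ending at $v$. Since $\pi$ is a longest directed path and has $n-k$ vertices, $\ell$ takes values in $\{1, \ldots, n-k\}$. Whenever $u \prec_{P_G} v$, a longest path ending at $u$ extends through the path from $u$ to $v$ to give a strictly longer path ending at $v$, so $\ell(u) < \ell(v)$. Hence each level set $A_i := \ell^{-1}(i)$ is an antichain in $P_G$. Moreover, the $j$-th vertex of $\pi$ has level exactly $j$ (at most $j$ because $\pi$ is longest, at least $j$ from the prefix of $\pi$ ending at it), so every level is nonempty; writing $a_i := |A_i|$, we have $a_i \geq 1$ for all $i$ and $\sum_{i=1}^{n-k}(a_i - 1) = n - (n-k) = k$.

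Next I would observe that any total order that lists all vertices of $A_1$ first, then all of $A_2$, and so on, with an \emph{arbitrary} internal order within each $A_i$, is a valid linear extension of $P_G$: comparable pairs have distinct levels by the previous paragraph and are thus ordered correctly. This produces $\prod_{i=1}^{n-k} a_i!$ distinct linear extensions, so $e(P_G) \geq \prod_i a_i!$. Finally, using the elementary bound $a! \geq 2^{a-1}$ valid for all integers $a \geq 1$, we conclude $e(P_G) \geq \prod_i 2^{a_i - 1} = 2^{\sum_i (a_i - 1)} = 2^k$, i.e.\ $\log e(P_G) \geq k$.

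There is no real obstacle; the only point to be careful about is that every level from $1$ to $n-k$ is actually realised by a vertex of $\pi$, which is what forces $\sum_i (a_i - 1)$ to equal $k$ rather than a smaller quantity. The rest is a one-line antichain-permutation count combined with the trivial inequality $a! \geq 2^{a-1}$.
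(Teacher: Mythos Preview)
Your proof is correct and is essentially the same as the paper's: both define a level function via longest-path length, note that the vertices of $\pi$ witness each level, permute freely within the resulting antichains, and finish with $a!\geq 2^{a-1}$. The only cosmetic difference is that the paper uses the longest path \emph{starting} at a vertex (and hence sorts levels high-to-low), whereas you use the longest path \emph{ending} at a vertex; these are dual and the argument is otherwise identical.
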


\begin{proof}
Denote $\ell(x)$ the length of the longest directed path in $G$ from a vertex $x$.
Denote for $i \in [n]$ by $L_i :=  | \{ x \in X \mid \ell(x) = i \} |$. 
For each $i \in [n - k]$, there is one $u \in \pi$ with $\ell(u) = i$ so: 

\begin{equation}
\label{eq:ksum}
        k = \sum\limits_{i=1}^{n - k}  (L_i -  1).
\end{equation}

Consider a linear order $L'$ that is obtained by first sorting all $x \in X$ by $\ell(x)$ from high to low, and ordering $(u, v)$ with $\ell(u) = \ell(v)$ arbitrarily. 
The order $L'$ must extend $P_G$, since any vertex $w$ that has a directed path towards a vertex $u$ must have that $\ell(w) > \ell(u)$. 
We count the number of distinct $L'$ that we obtain in this way to lower bound $e(P_G)$:
%(if $L_i = 0$ then we use $0! = 1$): 

\[
e(P_G) \geq \prod_{i=1}^{n - k} L_i ! \quad \Rightarrow \quad  e(P_G) \geq  \prod_{i=1}^{n - k} 2^{L_i - 1} = 2^{\left( \sum\limits_{i=1}^{n - k} (L_i - 1) \right)} \quad  \Rightarrow \quad  e(P_G) \geq  2^k
\]

\noindent
Here, the first implication uses that $x! \geq 2^{x - 1}$ and the second implication uses Equation~\ref{eq:ksum}.
\end{proof}

What remains is to show that $\sum\limits_{i = 1}^k \log d_i \in O(\log e(P_G))$.
To this end, we create a set of intervals:

\begin{definition}
    \label{def:intervals}
     Let $\pi^*$ be the directed path that Algorithm~\ref{algo:sort} outputs. 
      For any $i \in [n]$ denote by $\pi^*(x_i)$ the index of $x_i$ in $\pi^*$. 
     We create an embedding $E$ of $X$ by placing $x_i$ at position $\pi^*(x_i)$. 

         Recall that for $i \in [k]$, $p_i$ is the finger from where Algorithm~\ref{algo:sort} invokes \textsc{Search} with $x_i$ as the argument. 
     We create as set $\mathcal{R}$ of $n$ open intervals $R_i = (a_i, b_i) \subseteq [0, n]$ as follows:
     \begin{itemize}[noitemsep, nolistsep]
         \item  If $i > k$ then $(a_i, b_i) := (\pi^*(x_i) - 1, \pi^*(x_i))$.
         \item Else, $(a_i, b_i) := (\pi^*(p_i), \pi^*(x_i) )$.
     \end{itemize}
\end{definition}

\begin{lemma}
\label{lem:disjointness}
    Given distinct $x_i, x_j \in X$. If there exists a directed path from $x_i$ to $x_j$ in $G$ then the intervals  $R_i = (a_i, b_i)$ and $R_j = (a_j, b_j)$ are disjoint with  $b_i \leq a_j$.  
\end{lemma}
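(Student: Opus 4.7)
The plan is to reduce the statement to the single-edge case by induction on path length, then handle that case by a short argument about the algorithm's invariants. A preliminary observation I will use throughout is that every interval is non-empty, i.e.\ $a_v \leq b_v$. For vertices $v = x_i$ with $i > k$ this holds by definition. For an inserted $x_i$ the finger $p_i$ is an in-neighbor of $x_i$ in $G$, hence $p_i <_L x_i$; since $\pi^*$ orders vertices according to $L$, this yields $\pi^*(p_i) < \pi^*(x_i)$, and therefore $a_i = \pi^*(p_i) + 1 \leq \pi^*(x_i) = b_i$.

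Given this, a directed path $x_i = v_0 \to v_1 \to \cdots \to v_t = x_j$ in $G$ gives $b_i = b_{v_0} < a_{v_1} \leq b_{v_1} < a_{v_2} \leq \cdots < a_{v_t} = a_j$ by chaining the single-edge inequality with $a_{v_\ell} \leq b_{v_\ell}$. It thus suffices to prove $b_i < a_j$ when $x_i \to x_j$ is a direct edge of $G$, which I split on whether $x_j$ lies on the original longest path. If $j > k$, then $a_j = b_j = \pi^*(x_j)$; since $L$ extends $P_G$, the edge $x_i \to x_j$ forces $x_i <_L x_j$, so $b_i = \pi^*(x_i) < \pi^*(x_j) = a_j$. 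If $j \leq k$, the key is that when the algorithm processes $x_j$, the vertex $x_i$ is already in the current $\pi$; because $p_j$ is chosen as the farthest in-neighbor of $x_j$ along current $\pi$, and because the relative order of vertices already in $\pi$ is preserved under the \textsc{FingerInsert} operations performed later, it follows that $\pi^*(p_j) \geq \pi^*(x_i) = b_i$, and therefore $a_j = \pi^*(p_j) + 1 > b_i$.

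The main obstacle is to justify, in the second subcase, that $x_i$ is in $\pi$ by the time the algorithm processes $x_j$. If $i > k$, $x_i$ was in $\pi$ from the start and there is nothing to check. If $i \leq k$, I appeal to the topological-sort-style bookkeeping in Algorithm~\ref{algo:sort}: $x_j$ is added to $S$ only after its in-degree in $H$ reaches zero, which requires every in-neighbor of $x_j$ originally in $H$---in particular $x_i$---to have been removed from $H$ and (in the same iteration) inserted into $\pi$. Hence when $x_j$ is eventually popped from $S$ and processed, $x_i$ is already in the current $\pi$, completing the argument.
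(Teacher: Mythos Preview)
Your proof is correct. The single-edge reduction via chaining is a mild but genuine reorganization relative to the paper, which treats the full directed-path case directly through a three-way split on $(i,j)$: both $i,j>k$; $i\le k$ and $j>k$; and $j\le k$. In the paper's third case the assertion ``$p_j$ must equal or succeed $x_i$ in $\pi^*$'' is made for an arbitrary $x_i$ with a directed path to $x_j$, which tacitly relies on the same transitivity you spell out; your reduction to single edges buys a cleaner $j\le k$ argument because $x_i$ is then a \emph{direct} in-neighbor of $x_j$, so the comparison with the farthest in-neighbor $p_j$ is immediate and the topological-sort invariant you invoke is exactly what is needed. Conversely, your route pays the small price of needing the non-emptiness observation $a_v\le b_v$ to chain. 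One minor remark there: your justification ``$p_i$ is an in-neighbor of $x_i$ in $G$'' does not cover the dummy-finger case where $x_i$ has no in-neighbors, but this is harmless, since the chaining only uses $a_{v_\ell}\le b_{v_\ell}$ for the intermediate vertices $v_1,\dots,v_{t-1}$, each of which has an in-edge from $v_{\ell-1}$.
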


\begin{proof}
We consider three cases. 

If $i > k$ and $j > k$ then $x_i$ and $x_j$ are part of the original longest directed path in $G$. 
If $x_i$ has a directed path to $x_j$ in $G$ then $x_i$ precedes $x_j$ on the original longest directed path. 
Thus,  $\pi^*(x_i)$ precedes $\pi^*(x_j)$ and $R_i$ and $R_j$ are distinct unit intervals where $R_i$ precedes $R_j$. 

If $i \leq k$ and $j > k$ then $x_i$ was inserted into the path $\pi$ with $x_j$ already in $\pi$. 
Since $\pi^*$ is a linear extension of $P_G$, it follows that $x_i$ was inserted preceding $x_j$ in $\pi$. It follows that $x_i$ precedes $x_j$ in $\pi^*$ and thus $R_i = (\pi^*(p_i), \pi^*(x_i))$ lies strictly before $R_j = ( \pi^*(x_j) - 1, \pi^*(x_j))$. 

If $j \leq k$ then the vertex $p_j$ must equal or succeed $x_i$ in the path $\pi^*$. It follows that $b_i = \pi^*(x_i) \leq  a_i =  \pi^*(p_j)$. The fact that these intervals are open then makes them disjoint. 
\end{proof}

The set $\mathcal{R}$ induces an interval order $P_\mathcal{R}$ over $X$, which is the partial order $\prec_{\mathcal{R}}$ where $x_i$ and $x_j$ are incomparable whenever $R_i$ and $R_j$ intersect, and where otherwise $x_i \prec_{\mathcal{R}} x_j$ whenever $b_i \leq a_j$. 
By Lemma~\ref{lem:disjointness}, any linear order $L$ that extends $P_\mathcal{R}$ must also extend $P_G$ and so $e(P_\mathcal{R}) \leq e(P_G)$. The number of linear extensions of an interval order is much easier to count. 
In fact, Cardinal, Fiorini, Joret, Jungers and Munro~\cite{cardinal_sorting_2013} and Van der Hoog, Kostityna, L\"{o}ffler and Speckmann~\cite{van_der_hoog_preprocessing_2019} already upper bound the number of linear extensions of an interval order.
We paraphrase their upper bound, to give a weaker version applicable to this paper. For a proof from first principles, see Section~\ref{sec:firstprinciples}.

\begin{restatable}[Lemma~1 in~\cite{van_der_hoog_preprocessing_2019} and Lemma 3.2 in~\cite{cardinal_sorting_2013}]{lemma}{interval}
    \label{lem:intervalorder}
    Let $\mathcal{R} = (R_1, \ldots R_n)$ be a set of $n$ open intervals in $[0, n]$ and let each interval have at least unit size.  Let $P_\mathcal{R}$ be its induced partial order.
    Then:
    \[
    \sum\limits_{i=1}^n \log( |R_i| )  \in O(\log e(P_\mathcal{R})). 
    \]
\end{restatable}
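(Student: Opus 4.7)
The plan is to exhibit a large family of linear extensions of $P_\mathcal{R}$ by exploiting the integer freedom within each interval. Independently for each $R_i$, pick a representative $r_i \in R_i$; sort the intervals by $r_i$, breaking ties by index $i$, to obtain a permutation $\tau$. Because $R_i \prec_{P_\mathcal{R}} R_j$ (i.e.\ $b_i < a_j$) forces $r_i \le b_i < a_j \le r_j$, the permutation $\tau$ is always a linear extension of $P_\mathcal{R}$, so the $\prod_i \#R_i$ representative tuples each land in the set of linear extensions of $P_\mathcal{R}$.

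Next I would bound the multiplicity of this map. For a fixed $\tau$ in the image, the preimage consists of the non-decreasing sequences $r_{\tau(1)} \le \cdots \le r_{\tau(n)}$ with $r_{\tau(i)} \in R_{\tau(i)}$, which is at most the total number of non-decreasing sequences of length $n$ with values in $[1,n]$, namely $\binom{2n-1}{n-1} < 4^n$. Hence
\[
\log e(P_\mathcal{R}) \;\ge\; \sum_i \log(\#R_i) - 2n.
\]
Whenever $\sum_i \log(\#R_i) \ge 3n$, this already yields the desired $\sum_i \log(\#R_i) \le 3 \log e(P_\mathcal{R})$.

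The main obstacle is the complementary regime $\sum_i \log(\#R_i) < 3n$, where the additive $-2n$ slack dominates. There I would use the antichain structure of $P_\mathcal{R}$: for each integer $k$, the intervals containing $k$ form an antichain, so $e(P_\mathcal{R})$ is bounded below by a product of antichain factorials coming from a Mirsky-style decomposition of $P_\mathcal{R}$. Combined with the geometric constraint that any chain of intervals must pack disjointly into $[1,n]$ (so the chain length is controlled by the average interval length $\sum_i \#R_i / n$), a Jensen-type inequality should then yield $\log e(P_\mathcal{R}) = \Omega\bigl(\sum_i \log \#R_i\bigr)$ in this regime as well. The hard part is this last bookkeeping --- turning the antichain lower bound into a genuinely multiplicative estimate on $\sum_i \log(\#R_i)$ rather than a merely additive one.
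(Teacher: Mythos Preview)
Your approach is quite different from the paper's. The paper does not argue from scratch; it derives the lemma from a known variational characterisation of $\log e(P_\mathcal{R})$ for interval orders (due to Cardinal et al.\ and van der Hoog et al.), namely that $\log e(P_\mathcal{R})$ equals, up to a constant factor, $\sup_{\mathcal{I}} \sum_i \log(n|I_i|)$ where the supremum ranges over all interval embeddings of the order into $[0,1]$. One then observes that the given integer intervals, widened to open intervals of length exactly $\#R_i$, constitute one such embedding, and the bound drops out immediately. So the paper's ``proof'' is really a one-line reduction to the entropy machinery in the cited references.

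Your representative-picking argument in the first part is correct and pleasant: it gives $\log e(P_\mathcal{R}) \ge \sum_i \log \#R_i - 2n$ by an elementary count. The problem is the second part. The Mirsky/antichain strategy you sketch cannot close the gap, and this is not merely missing bookkeeping. Take $R_1 = [1,n]$ and $R_i = [i,i]$ for $2 \le i \le n$. Here $\sum_i \log \#R_i = \log n$ and $e(P_\mathcal{R}) = n$ (insert $R_1$ anywhere along the chain $R_2 \prec \cdots \prec R_n$), so the lemma holds with constant~$1$. But the Mirsky decomposition by height has $|A_1| = 2$ and $|A_j| = 1$ for $j \ge 2$, so $\prod_j |A_j|! = 2$ and the antichain bound yields only $\log e(P_\mathcal{R}) \ge 1$ --- a factor $\log n$ too weak. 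Your packing observation does not help either: the longest chain has $n-1$ elements while the average interval size is below~$2$, so no useful constraint on chain length follows. The underlying difficulty is structural: a single long interval among many singletons contributes $\log n$ to $\sum_i \log \#R_i$ but only $O(1)$ to any product of antichain factorials. Recovering the multiplicative bound in this regime seems to genuinely require the fractional-stable-set / graph-entropy argument behind the cited references, which is precisely what your outline tries to avoid.
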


We are now ready to prove our main theorem:

\begin{theorem}
    Given a directed acyclic graph $G$ over $X$, inducing a partial order $P_G$, and an oracle $O_L$ whose queries specify a linear order $L$ that extends $P_G$, there exists an algorithm that uses linear space, $O(n + m + \log e(P_G))$ time and $O(\log e(P_G))$ oracle queries to output the sorted order of $X$. 
\end{theorem}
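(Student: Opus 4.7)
The plan is to combine the runtime analysis already written just before Lemma~\ref{lem:kbound} with the three preceding lemmas. Algorithm~\ref{algo:sort} uses linear space, $O\bigl(n + m + \sum_{i=1}^k (1 + \log d_i)\bigr)$ time, and $O\bigl(\sum_{i=1}^k (1 + \log d_i)\bigr)$ queries, so it suffices to show
\[
k + \sum_{i=1}^k \log d_i \;\in\; O(\log e(P_G)),
\]
and I plan to bound the two summands separately.

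The $k$ term is immediate from Lemma~\ref{lem:kbound}. For the remaining sum, the plan is to compare each $d_i$ with the interval size $\#R_i$ from Definition~\ref{def:intervals} and then invoke Lemma~\ref{lem:intervalorder} on the induced interval order $P_\mathcal{R}$. The first step is the short monotonicity observation that $d_i \leq \#R_i$: when Algorithm~\ref{algo:sort} inserts $x_i$, the then-current path places $x_i$ exactly $d_i$ positions after $p_i$, and subsequent insertions can only grow the gap between $p_i$ and $x_i$ in $\pi^*$, so $\pi^*(x_i) - \pi^*(p_i) \geq d_i$, which equals $\#R_i$. Since $\#R_i = 1$ (and hence $\log \#R_i = 0$) for $i > k$, this yields $\sum_{i=1}^k \log d_i \leq \sum_{i=1}^n \log \#R_i$.

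To finish, Lemma~\ref{lem:disjointness} implies that every linear extension of $P_\mathcal{R}$ also extends $P_G$, hence $e(P_\mathcal{R}) \leq e(P_G)$; Lemma~\ref{lem:intervalorder} then gives $\sum_{i=1}^n \log \#R_i \in O(\log e(P_\mathcal{R})) \subseteq O(\log e(P_G))$. Plugging both bounds back into the runtime and query estimates yields the theorem. The main conceptual hurdle, already resolved by the three preceding lemmas, is to replace the intractable quantity $\log e(P_G)$ by statistics of the embedded interval system; once this reduction is in place, the remaining work is the brief monotonicity observation $d_i \leq \#R_i$ and a routine substitution.
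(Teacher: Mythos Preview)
Your proposal is correct and follows essentially the same route as the paper's own proof: bound $k$ by Lemma~\ref{lem:kbound}, use Lemma~\ref{lem:disjointness} to get $e(P_\mathcal{R})\le e(P_G)$, observe $d_i\le \#R_i$ via the monotonicity of the gap between $p_i$ and $x_i$ under later insertions, and finish with Lemma~\ref{lem:intervalorder}. Your write-up is in fact slightly more explicit than the paper's about why $d_i\le \#R_i$ and about extending the sum from $k$ to $n$ using $\#R_i=1$ for $i>k$.
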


\begin{proof}
    Our algorithm runs in $O(n + m + k + \sum\limits_{i=1}^k \log d_i)$ time and uses $O(k + \sum\limits_{i=1}^k \log d_i)$ queries. 
    By Lemma~\ref{lem:kbound}, $k \in O(\log e(P_G))$. 
    The set $\mathcal{R}$ of Definition~\ref{def:intervals} is a set where each interval $R_i$ has at least unit size.
    By Lemma~\ref{lem:disjointness}, $e(P_\mathcal{R}) \leq e(P_G)$. 
    For $i \leq k$, the size $|R_i|$ must be at least $d_i$ since there are per construction at least $d_i$ vertices on the subpath from $\pi^*(p_i)$ to $\pi^*(q_j)$ in the embedding $E$. 
    It follows by Lemma~\ref{lem:intervalorder} that $\sum\limits_{i=1}^k \log d_i \in O(\log e(P_\mathcal{R}) ) \subseteq O(\log e(P_G))$. 
\end{proof}

\section{Deriving Lemma~\ref{lem:intervalorder} from first principles}
\label{sec:firstprinciples}

We consider the following statement:

\interval*

\noindent

While this lemma follows verbatim from Lemma 1 in~\cite{van_der_hoog_preprocessing_2019}    (which itself is weaker version of Lemma 3.2 in in~\cite{cardinal_sorting_2013}), we also give a proof from first principles. 
\newcommand{\Vol}{\operatorname{Vol}}
We reindex $\mathcal{R}$ so that $(R_1, R_2, \dots, R_m)$ is a maximum cardinality set of pairwise-disjoint intervals in $R$, sorted from left to right.  
For any $R_i \in \mathcal{R}$ we denote by $\operatorname{mid}(R_i)$ its centre. 
We define an $(n-m)$-dimensional polytope associated with $\mathcal{R}$.
$$
A = \Big\{x \in \mathbb{R}^n \Big| x_i = \operatorname{mid}(R_i) \text{ for } i \le m \text{ and } x_i \in R_i \text{ for } i > m \Big\}
$$
The volume of this polytope is $\Vol(A) =  \prod\limits_{i=m+1}^{n} |R_i|$.
Any $x = (x_1, \ldots, x_n) \in A$ with distinct coordinates defines a linear extension $L$ of $P_\mathcal{R}$ via the rule ``$i \prec j$ if and only if $x_i < x_j$''. We call the point $x$ a \emph{realisation of $L$}.
Observe that not all linear extensions of $P_\mathcal{R}$ have a realisation $x \in A$.

\begin{claim}
\label{claim:one}
Let $L$ be a linear extension of $P_R$. Let $A_L = \left\{x \in A \big| x \textnormal{ realises }L\right\}$, then
$$
\Vol(A_L) \le (2e)^{n-m}.
$$
\end{claim}
\begin{proof}
Consider the $m+1$ open intervals:

\[
I := (-0.5, \operatorname{mid}(R_1)), \ldots, (\operatorname{mid}(R_i), \operatorname{mid}(R_{i+1})), \dots,  (\operatorname{mid}(R_{m-1}), n + 0.5).
\]

For all $(i, j) \in [m] \times [m]$, the intervals $(R_i, R_j)$ are disjoint and have length at least $1$. It follows that each interval in $I$ 
has length at least $1$.
Suppose that $L$ cannot be realised, then $\Vol(A_L) = 0$. 
Otherwise, we observe that for any $x, x'$ realising $L$ and any $i \in \{m+1, ..., n\}$, an interval in $I$ contains $x_i$ if and only if it contains $x_i'$.
This allows us to say that an interval in $I$ is \emph{occupied under $L$} if it contains $x_i$ for any realisation $x$ of $L$ and any $i > m$.
There are at most $n-m$ occupied intervals, and therefore at least $m+1 - (n-m) = (2 m + 1 - n)$ non-occupied intervals in $I$.
Let $G_L$ be the union of the occupied intervals. Then $|G_L| \le |(-0.5, n+0.5)| - (2m+1 - n) = 2(n - m)$, as each non-occupied interval has length at least $1$. Let:
$$
B_L = \left\{y \in \mathbb{R}^{m} \times G_L^{n-m} \Big|  y_i = \operatorname{mid}(R_i) \text{ for } i \le m \text{ and } (x_j < x_k \Leftrightarrow y_j < y_k) \text{ for } j, k > m\right\}.
$$
Then $A_L \subseteq B_L$, hence
$$
\Vol(A_L) \le \Vol(B_L) = \frac{|G_L|^{n-m}}{(n-m)!} = \frac{(2(n-m))^{n-m}}{(n-m)!} \le (2e)^{n-m}. \qedhere
$$

\end{proof}
\begin{claim}
\label{claim:two}
$\sum\limits_{i=m+1}^{n} \ln |R_i| \le \ln e(P_R) + (1+\ln 2) (n-m)$
\end{claim}
\begin{proof}
We note that:
$$
\sum_{i=m+1}^{n} \ln |R_i| = \ln \Vol(A) \le \ln\sum_{L \supseteq P_R} \Vol(A_L).$$
We now use Claim~\ref{claim:one} to upper bond $\Vol(A_L)$ for all $L$ and the claim follows. 
\end{proof}

\begin{claim}
\label{claim:three}
$$
\sum_{i=1}^{m} \ln |R_i| \le (n-m)
$$
\end{claim}
\begin{proof}
As $R_1, \dots, R_m$ are pairwise disjoint intervals in $[0, n]$, $\sum_{i=1}^{m} |R_i| \le n$. 

By analysis, $\ln x \le x-1$, hence
$$
\sum_{i=1}^{m} \ln |R_i| \le  \sum_{i=1}^{m} (|R_i| - 1) =  (n - m). \qedhere
$$
\end{proof}
Combining Claim~\ref{claim:two} and Claim \ref{claim:three} yields
$$
\sum_{i=1}^{n} \log |R_i| = \frac{1}{\ln(2)} \cdot \sum_{i=1}^{n} \ln |R_i| \le \log e(P_R) + \Big(1+\frac{2}{\ln(2)}\Big) \cdot (n-m).
$$
We now apply Lemma~\ref{lem:kbound} of our paper to note that $n-m \le \log e(P_R)$ and Lemma 3 follows.

\bibliographystyle{plainurl}
\bibliography{refs}
\end{document}